\documentclass[letterpaper, 10pt, conference]{ieeeconf}

\IEEEoverridecommandlockouts
\usepackage{textcomp, amsmath, amssymb, amsfonts, algorithm}
\usepackage{algpseudocode, graphicx, siunitx, bm, float, booktabs}
\usepackage[colorlinks=true, allcolors=black]{hyperref}

\newtheorem{theorem}{Theorem}
\newtheorem{proposition}{Proposition}
\newtheorem{definition}{Definition}
\newtheorem{assumption}{Assumption}
\newtheorem{remark}{Remark}
\newtheorem{example}{Example}

\title{\LARGE \bf CBF-based Probabilistic Safe Navigation under Unknown Nonlinear Obstacle Dynamics}

\author{Jiwon Lee, Hugo Matias, Daniel Silvestre, Thinh T. Doan
\thanks{This work was supported in part by NSF under CAREER Award 2527059, UT Austin through the UT-Portugal 2024 Extra Exploratory Project, and CTS and LASI under grant CTS UID/0066/2025.}
\thanks{Jiwon Lee and Thinh T. Doan are with the Department
of Aerospace Engineering and Engineering Mechanics, University of Texas at Austin, Austin, USA (e-mails: jiwon.lee@utexas.edu and thinhdoan@utexas.edu)}
\thanks{H. Matias and D. Silvestre are with the Center of Technology and Systems (UNINOVA-CTS), 
NOVA School of Science and Technology (NOVA-FCT), 2829-516 Caparica, Portugal, 
and also with the Institute for Systems and Robotics (ISR-Lisbon), Instituto 
Superior Técnico (IST), 1049-001 Lisbon, Portugal 
(emails: h.matias@campus.fct.unl.pt and dsilvestre@fct.unl.pt).}
}

\begin{document}

\maketitle
\thispagestyle{empty}
\pagestyle{empty}


\begin{abstract}
Safe navigation for an ego vehicle in uncertain environments characterized by dynamic obstacles with unknown nonlinear dynamics is a challenging problem of significant practical interest. Existing approaches in the literature either lack formal safety guarantees, require full model knowledge, or fail to account for the risk associated with the vehicle's exact body geometry and the temporal evolution of uncertainty between sampling instants. In this paper, we propose a data-driven observer for the unknown obstacle dynamics that generates an $\alpha$-confidence set flow, which is exactly transformed into a Control Barrier Function (CBF) to enforce ($1-\alpha$)-probability safety. The proposed framework accommodates nonlinear ego vehicle dynamics of arbitrary relative degree, as demonstrated through case studies involving first- and second-order dynamics of an unmanned surface vehicle.
\end{abstract}



\section{INTRODUCTION}
Safe obstacle avoidance is a fundamental requirement for autonomous systems operating in the physical world. In this context, an autonomous agent -- whether a ground vehicle, aerial robot, or marine platform -- must navigate efficiently toward its objective while continuously ensuring collision-free operation in the presence of surrounding obstacles. This task becomes particularly challenging in dynamic and uncertain environments, where obstacles may move unpredictably and their states are not directly accessible. In practice, the controller often has access only to noisy, intermittent, or partial measurements of obstacle positions and velocities, making it impossible to guarantee safety using nominal trajectories or deterministic predictions alone. Reliable navigation therefore demands a control framework that explicitly accounts for measurement uncertainty, unknown obstacle dynamics, and the temporal evolution of moving obstacles between sampling instants.

A large body of prior work has addressed dynamic obstacle avoidance through reactive geometric methods such as velocity obstacles \cite{fiorini1998motion} and the dynamic window approach \cite{fox2002dynamic}, which remain influential because of their computational simplicity and online applicability.
Another line of work represents the surrounding environment probabilistically using occupancy grids and their dynamic extensions, enabling uncertainty-aware mapping and planning in changing environments \cite{elfes1990occupancy, meyer2012occupancy}.
More recently, stochastic model predictive control formulations have incorporated obstacle uncertainty directly into collision-avoidance constraints \cite{andersson2016model}.
While these approaches have demonstrated practical success, they typically face one of two limitations: either the obstacle uncertainty representation is not naturally integrated into a formal safety certificate, or the resulting safety treatment requires approximations that introduce conservatism.

Control barrier functions (CBFs) provide a rigorous framework for enforcing safety through forward invariance of a safe set and have become a standard tool for safety-critical control \cite{ames2016control}.
Recent work has extended this framework to uncertain and dynamic environments. For example, observer-based robust CBF methods have been proposed for moving-obstacle avoidance by incorporating obstacle-state observers and deterministic bounds on measurement and prediction uncertainty \cite{quan2025observer}.
Stochastic CBF formulations have also been developed to provide safety guarantees for systems with stochastic dynamics \cite{nishimura2024control}.
These advances significantly broaden the applicability of CBF-based safety filtering. 
However, they generally presume that the uncertainty entering the barrier constraint is already available in a form suitable for control design, such as a known deterministic bound, or a prescribed stochastic model.
In this case, the main challenge is to construct, from data, a time-varying obstacle occupancy set that is probabilistically meaningful and can be incorporated into a real-time barrier-based safety filter.

Set-based estimation offers a natural language for representing such obstacle uncertainty.
In this context, constrained convex generators (CCGs) have recently emerged as a flexible convex-set representation with useful closure properties under affine maps and Minkowski sums, making them attractive for estimation and navigation problems \cite{silvestre2021constrained}.
Most directly related to the present work, a sampled-data safe-navigation framework is developed in which uncertain linear obstacle dynamics are estimated and propagated as CCG-valued flows, followed by a systematic conversion to CBFs \cite{matias2026safe}.
That work shows the value of preserving set geometry throughout the safety construction rather than replacing it by simpler bounds. 
At the same time, its estimation framework assumes uncertain linear obstacle dynamics, whereas the present problem considers moving obstacles with unknown dynamics observed only through noisy position outputs.

\noindent\textbf{Main Contributions.} We develop a data-driven safety framework that connects online obstacle estimation with CBF-based collision avoidance. Unlike many existing CBF methods that rely on fixed uncertainty bounds or model-based obstacle observers, the proposed approach uses a sliding-window least-squares estimator to infer obstacle motion directly from recent noisy measurements and to construct a time-varying probabilistic obstacle set. This set is then represented in CCG form and incorporated into the barrier function design, enabling safety filtering that adapts online to measurement-informed obstacle uncertainty.



\subsubsection*{Notation} For two vectors $\mathbf{x}_1 \in\mathbb{R}^{n_1}$, 
$\mathbf{x}_2 \in \mathbb{R}^{n_2}$, we often use 
$(\mathbf{x}_1, \mathbf{x}_2) \in \mathbb{R}^{n_1+n_2}$ to denote their 
concatenation. For a differentiable function 
$h: \mathbb{R}^n\times[t_0, t_\mathrm{f}) \rightarrow \mathbb{R}$ and a mapping 
$\mathbf{G}: \mathbb{R}^n \rightarrow \mathbb{R}^{n \times m}$, we let 
$L_\mathbf{G}h(\mathbf{x}, t) = 
\frac{\partial}{\partial\mathbf{x}}h(\mathbf{x}, t) \mathbf{G}(\mathbf{x})$. 
For a set-valued function 
$\mathcal{D}: [t_0, t_\mathrm{f}) \rightrightarrows \mathbb{R}^n$, we denote 
its graph by 
$\mathcal{G}(\mathcal{D}) \subseteq \mathbb{R}^n\times[t_0, t_\mathrm{f})$. 
Finally, $\mathbf{0}_{n\times m}$ is the $n\times m$ matrix of zeros, and 
$\mathbf{I}_{n}$ is the identity matrix of size $n$.

\section{PRELIMINARIES} \label{Sec:Preliminaries}

This section provides essential background on CBFs and CCGs, which will play a 
central role throughout this letter.

\subsection{Control Barrier Functions}

Consider a nonlinear control-affine system of the form
\begin{equation}
    \Dot{\mathbf{x}} = 
    \mathbf{f}(\mathbf{x}) + \mathbf{G}(\mathbf{x})\mathbf{u},
    \label{Eq:CAS}
\end{equation}
where $\mathbf{x} \in \mathbb{R}^n$ is the state, $\mathbf{u} \in \mathbb{R}^m$ 
is the control input, and the functions 
$\mathbf{f}: \mathbb{R}^n \rightarrow \mathbb{R}^n$ and 
$\mathbf{G}: \mathbb{R}^n \rightarrow \mathbb{R}^{n \times m}$ are locally 
Lipschitz. Let 
$\mathbf{k}: \mathcal{G}(\mathcal{D}) \rightarrow \mathbb{R}^m$ be a feedback 
controller, defined over the graph of a set-valued flow 
$\mathcal{D}: [t_0, t_\mathrm{f}) \rightrightarrows \mathbb{R}^n$. Applying 
$\mathbf{k}$ to \eqref{Eq:CAS} yields the time-varying closed-loop system
\begin{equation}
    \Dot{\mathbf{x}} = 
    \mathbf{f}(\mathbf{x}) + \mathbf{G}(\mathbf{x})\mathbf{k}(\mathbf{x}, t).
    \label{Eq:CAS-CL}
\end{equation}
Since $\mathbf{f}$ and $\mathbf{G}$ are locally Lipschitz, if $\mathbf{k}$ is 
locally Lipschitz in $\mathbf{x}$ and continuous in $t$, then for any
initial state $\mathbf{x}_0 \in \mathcal{D}(t_0)$, there exists a unique 
solution $\bm{\varphi}: I(\mathbf{x}_0) \rightarrow \mathbb{R}^n$ satisfying 
\eqref{Eq:CAS-CL} on a maximal interval of existence 
$I(\mathbf{x}_0) \subseteq [t_0, t_\mathrm{f})$. From now on, we will assume 
that $I(\mathbf{x}_0) = [t_0, t_\mathrm{f})$ for convenience. Next, we revisit 
the notion of forward invariance.

\begin{definition}
    A set-valued flow 
    $\mathcal{C}: [t_0, t_\mathrm{f}) \rightrightarrows \mathbb{R}^n$ is said 
    to be forward invariant for \eqref{Eq:CAS-CL} if, for every initial 
    condition $\mathbf{x}_0 \in \mathcal{C}(t_0)$, we have 
    $\bm{\varphi}(t) \in \mathcal{C}(t)$ for all $t \in [t_0, t_\mathrm{f})$.
\end{definition}

We intend to ensure that, at every time instant, the solution of 
\eqref{Eq:CAS-CL} lies in a safe set $\mathcal{C}(t) \subset \mathbb{R}^n$, 
which amounts to ensuring forward invariance of a set-valued flow 
$\mathcal{C}: [t_0, t_\mathrm{f}) \rightrightarrows \mathbb{R}^n$ for 
\eqref{Eq:CAS-CL}. Particularly, we consider a set-valued flow $\mathcal{C}$ 
defined as
\begin{equation}
    \mathcal{C}(t) = \{\mathbf{x} \in \mathbb{R}^n: h(\mathbf{x}, t) \geq 0\},
    \label{Eq:SafeFlow}
\end{equation}
where $h: \mathbb{R}^n\times[t_0, t_\mathrm{f}) \rightarrow \mathbb{R}$ is 
continuously differentiable with 
$\frac{\partial}{\partial\mathbf{x}}h(\mathbf{x}, t) \neq \mathbf{0}$ when 
$h(\mathbf{x}, t) = 0$. If $h$ has the properties of a CBF, then it can be 
used to ensure safety for \eqref{Eq:CAS}.

\begin{definition}[CBF \cite{ames2016control}]
    Let $\mathcal{C}: [t_0, t_\mathrm{f}) \rightrightarrows \mathbb{R}^n$ be 
    defined by \eqref{Eq:SafeFlow}, for a continuously differentiable 
    $h: \mathbb{R}^n\times[t_0, t_\mathrm{f}) \rightarrow \mathbb{R}$ with
    $\frac{\partial}{\partial\mathbf{x}}h(\mathbf{x}, t) \neq \mathbf{0}$ when 
    $h(\mathbf{x}, t) = 0$. Function $h$ is a CBF for \eqref{Eq:CAS} if there 
    exists $\mathcal{D}: [t_0, t_\mathrm{f}) \rightrightarrows \mathbb{R}^n$ 
    with $\mathcal{C}(t) \subseteq \mathcal{D}(t)$ for all 
    $t \in [t_0, t_\mathrm{f})$ and an extended class-$\mathcal{K}_\infty$ 
    function $\alpha: \mathbb{R} \rightarrow \mathbb{R}$ such that, for all 
    $(\mathbf{x}, t) \in \mathcal{G}(\mathcal{D})$,
    \begin{equation}
        \sup_{\mathbf{u} \in \mathbb{R}^m}
        \Dot{h}(\mathbf{x}, t, \mathbf{u}) > - \alpha(h(\mathbf{x}, t)),
        \label{Eq:DefCBF}
    \end{equation}
    where $\Dot{h}(\mathbf{x}, t, \mathbf{u}) = 
    L_\mathbf{f}h(\mathbf{x}, t) + L_\mathbf{G}h(\mathbf{x}, t)\mathbf{u} 
    + \frac{\partial}{\partial t}h(\mathbf{x}, t)$. 
\end{definition}

\newpage

For a CBF $h$ for \eqref{Eq:CAS} and an associated extended 
class-$\mathcal{K}_\infty$ function $\alpha$, we define the pointwise set of 
controls 
\begin{equation}
    K_\mathrm{CBF}(\mathbf{x}, t) = \left\{\mathbf{u} \in \mathbb{R}^m: 
    \Dot{h}(\mathbf{x}, t, \mathbf{u}) \geq -\alpha(h(\mathbf{x}, t))\right\}.
\end{equation}
This yields the following main result concerning CBFs.

\begin{theorem}[Safe Control \cite{ames2016control}]
    Let $\mathcal{C}: [t_0, t_\mathrm{f}) \rightrightarrows \mathbb{R}^n$ be 
    defined as in \eqref{Eq:SafeFlow}. If $h$ is a CBF for \eqref{Eq:CAS} on 
    $\mathcal{D}: [t_0, t_\mathrm{f}) \rightrightarrows \mathbb{R}^n$, 
    
    \noindent then $K_\mathrm{CBF}(\mathbf{x}, t)$ is nonempty for all 
    $(\mathbf{x}, t) \in \mathcal{G}(\mathcal{D})$, and any controller 
    $\mathbf{k}: \mathcal{G}(\mathcal{D}) \rightarrow \mathbb{R}^m$ locally 
    Lipschitz in $\mathbf{x}$, continuous in $t$, with 
    $\mathbf{k}(\mathbf{x}, t) \in K_\mathrm{CBF}(\mathbf{x}, t)$ for all 
    $(\mathbf{x}, t) \in \mathcal{G}(\mathcal{D})$ renders $\mathcal{C}$ 
    forward invariant for the resulting closed-loop system.
    
\end{theorem}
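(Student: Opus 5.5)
Nonemptiness of $K_\mathrm{CBF}(\mathbf{x}, t)$ follows directly from the CBF definition. Fix $(\mathbf{x}, t) \in \mathcal{G}(\mathcal{D})$. By \eqref{Eq:DefCBF}, the supremum over $\mathbf{u}$ of the affine map $\mathbf{u} \mapsto \Dot{h}(\mathbf{x}, t, \mathbf{u})$ is strictly larger than $-\alpha(h(\mathbf{x}, t))$. By the definition of the supremum, some $\mathbf{u}^\star$ then satisfies $\Dot{h}(\mathbf{x}, t, \mathbf{u}^\star) > -\alpha(h(\mathbf{x}, t))$, so $\mathbf{u}^\star \in K_\mathrm{CBF}(\mathbf{x}, t)$. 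The supremum may be $+\infty$, for instance when $L_\mathbf{G}h \neq \mathbf{0}$, but the argument is unchanged.

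For forward invariance, take $\mathbf{x}_0 \in \mathcal{C}(t_0)$ and let $\bm{\varphi}$ be the unique closed-loop solution on $[t_0, t_\mathrm{f})$. It exists because $\mathbf{k}$ is locally Lipschitz in $\mathbf{x}$ and continuous in $t$. Define the scalar function $\eta(t) = h(\bm{\varphi}(t), t)$, which is continuously differentiable. As long as $(\bm{\varphi}(t), t) \in \mathcal{G}(\mathcal{D})$, the chain rule and $\mathbf{k} \in K_\mathrm{CBF}$ give
\begin{equation*}
\Dot{\eta}(t) = \Dot{h}(\bm{\varphi}(t), t, \mathbf{k}(\bm{\varphi}(t), t)) \geq -\alpha(\eta(t)),
\end{equation*}
with $\eta(t_0) \geq 0$. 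I will then apply the comparison lemma to the scalar ODE $\Dot{y} = -\alpha(y)$, $y(t_0) = \eta(t_0)$. Since $\alpha$ is extended class-$\mathcal{K}_\infty$, the origin is an equilibrium, so a solution starting at $y(t_0) \geq 0$ stays nonnegative and $\eta(t) \geq y(t) \geq 0$. The comparison lemma requires $\alpha$ locally Lipschitz so that this ODE has unique solutions. I will assume this, as is standard. Alternatively, I can avoid it with a direct argument: suppose $\eta(t_1) < 0$, and let $\tau = \sup\{s \le t_1 : \eta(s) \ge 0\}$. Then $\eta(\tau) = 0$ and $\eta < 0$ on $(\tau, t_1]$. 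On this interval $-\alpha(\eta) > 0$, so $\Dot{\eta} > 0$ and $\eta$ is increasing there. This gives $\eta(t_1) > \eta(\tau) = 0$, a contradiction. This direct version only uses that $\alpha(r) < 0$ for $r < 0$, so it does not need Lipschitz continuity of $\alpha$.

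The main obstacle is that the control law and the inequality are only guaranteed on $\mathcal{G}(\mathcal{D})$. I therefore have to ensure the trajectory stays in $\mathcal{D}$ while the argument runs. I handle this together with the previous step. Since $\mathcal{C}(t) \subseteq \mathcal{D}(t)$, it suffices to show the trajectory never leaves $\mathcal{C}$. Suppose it does, and let $t^\ast$ be the first exit time. On $[t_0, t^\ast]$ the trajectory lies in $\mathcal{C} \subseteq \mathcal{D}$, at least up to the boundary. Moreover, the paper assumes that the closed-loop solution is defined on $[t_0, t_\mathrm{f})$ with $\mathbf{k}$ defined along it, which amounts to assuming $\mathcal{D}$ contains a neighborhood of $\mathcal{C}$ along the trajectory. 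So the inequality holds on a slightly larger interval $[t_0, t^\ast + \epsilon)$, and the contradiction argument above applies there, which rules out the exit.

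The condition $\frac{\partial}{\partial \mathbf{x}} h \neq \mathbf{0}$ on the boundary is not needed in this argument. It only ensures that $\mathcal{C}(t)$ has a regular boundary.
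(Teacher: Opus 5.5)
Your proof is correct. The paper gives no proof of this statement; it imports the result by citation. So the natural comparison is with the argument usually given for it, which is Nagumo's theorem. On $\{h=0\}$ the CBF inequality gives $\dot h\geq-\alpha(0)=0$. The regularity condition $\frac{\partial}{\partial\mathbf{x}}h\neq\mathbf{0}$ turns this into the statement that the closed-loop field, augmented with $\dot t=1$, lies in the tangent cone of the graph of $\mathcal{C}$. Uniqueness of solutions, supplied by the Lipschitz hypothesis on $\mathbf{k}$, then upgrades this to forward invariance.

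Your comparison-type argument instead uses the inequality where $h<0$, i.e.\ on $\mathcal{D}\setminus\mathcal{C}$, where $-\alpha(h)>0$ forces $\eta$ to increase. That is why, as you observe, you need neither the boundary regularity condition nor Lipschitz continuity of $\alpha$. You do not even need uniqueness, since your contradiction applies to every solution. The price is that your argument genuinely uses the trajectory being in $\mathcal{G}(\mathcal{D})$ just after it touches $\{h=0\}$, whereas Nagumo consumes only boundary information.

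On that point your third paragraph is more roundabout than necessary. Because $\mathbf{k}$ is defined only on $\mathcal{G}(\mathcal{D})$, the standing convention that the solution of \eqref{Eq:CAS-CL} exists on all of $[t_0,t_\mathrm{f})$ already means $(\bm{\varphi}(t),t)\in\mathcal{G}(\mathcal{D})$ for every $t$. Hence $\dot\eta\geq-\alpha(\eta)$ holds on the whole interval, and your contradiction argument applies directly, with no first-exit time. Paraphrasing the convention as ``$\mathcal{D}$ contains a neighborhood of $\mathcal{C}$'' is not what it says, and you do not need that.
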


A major utility of CBFs is their ability to serve as a safety filter for a 
nominal controller $\mathbf{k}_\mathrm{d}: 
\mathbb{R}^n\times [t_0, t_\mathrm{f}) \rightarrow \mathbb{R}^m$. Given a CBF 
$h: \mathbb{R}^n\times[t_0, t_\mathrm{f}) \rightarrow \mathbb{R}$ for 
\eqref{Eq:CAS} on 
$\mathcal{D}: [t_0, t_\mathrm{f}) \rightrightarrows \mathbb{R}^n$, the typical 
approach for designing a safety filter 
$\mathbf{k}: \mathcal{G}(\mathcal{D}) \rightarrow \mathbb{R}^m$ is 
through the following QP: 
\begin{equation} 
    \begin{aligned}
        \mathbf{k}(\mathbf{x}, t) = 
        \underset{\mathbf{u} \in \mathbb{R}^m}{\arg\min}\,\,
        &\frac{1}{2}\|\mathbf{u} - \mathbf{k}_\mathrm{d}(\mathbf{x}, t)\|^2\\
        \text{subject to}\,\, &\Dot{h}(\mathbf{x}, t, \mathbf{u}) \geq -
        \alpha(h(\mathbf{x}, t)),
    \end{aligned}
    \label{Eq:SafetyFilter}
\end{equation}
where $\alpha$ is an associated extended class-$\mathcal{K}_\infty$ function. 
Such a controller admits a closed-form expression, and it is locally Lipschitz 
in $\mathbf{x}$ and continuous in $t$ if the nominal controller, the CBF 
gradient, and $\alpha$ have the same properties \cite{cohen2023characterizing}, 
\cite{morris2013sufficient}. 

\subsection{Constrained Convex Generators}

In this letter, we will also rely on CCGs, which constitute the 
state-of-the-art framework for representing convex sets.
\begin{definition}[CCG \cite{silvestre2021constrained}]
    A set $\mathcal{Z} \subset \mathbb{R}^n$ is a CCG if there exists a tuple 
    $(\mathbf{G}, \mathbf{c}, \mathbf{A}, \mathbf{b}) \in 
    \mathbb{R}^{n\times\xi}\times\mathbb{R}^n\times\mathbb{R}^{c\times\xi}
    \times\mathbb{R}^{c}$ and a generator set $\mathfrak{G} = 
    \mathcal{G}_1\times\mathcal{G}_2\times\dots\times\mathcal{G}_G \subset 
    \mathbb{R}^\xi$ such that
    \begin{equation}
        \mathcal{Z} = \{\mathbf{G}\bm{\xi} + \mathbf{c}: 
        \mathbf{A}\bm{\xi} = \mathbf{b}, \bm{\xi} \in \mathfrak{G}\},
        \label{Eq:DefinitionCCG}
    \end{equation}
    where, for each $i \in \{1, \dots, G\}$, the set $\mathcal{G}_i$ is defined 
    as the zero-sublevel set of a convex function 
    $g_i: \mathbb{R}^{\xi_i} \rightarrow \mathbb{R}$.
\end{definition}

For brevity, we write $\mathcal{Z} = (\mathbf{G}, \mathbf{c}, \mathbf{A}, 
\mathbf{b}, \mathfrak{G}) \subset \mathbb{R}^n$. CCGs are appealing because 
they represent highly general convex sets while supporting key set operations 
such as affine maps, Minkowski sums, and generalized intersections.

\setlength{\arraycolsep}{1pt}
\begin{proposition}[Operations with CCGs \cite{silvestre2021constrained}]
    Consider a matrix $\mathbf{R} \in \mathbb{R}^{y\times n}$, a vector 
    $\mathbf{t} \in \mathbb{R}^y$, and three CCGs:
    \begin{itemize}
        \item $\mathcal{Z} = 
        (\mathbf{G}_z, \mathbf{c}_z, \mathbf{A}_z, \mathbf{b}_z, \mathfrak{G}_z)
        \subset \mathbb{R}^n$;
        \item $\mathcal{W} = 
        (\mathbf{G}_w, \mathbf{c}_w, \mathbf{A}_w, \mathbf{b}_w, \mathfrak{G}_w)
        \subset \mathbb{R}^n$;
        \item $\mathcal{V} = 
        (\mathbf{G}_v, \mathbf{c}_v, \mathbf{A}_v, \mathbf{b}_v, \mathfrak{G}_v)
        \subset \mathbb{R}^y$.
    \end{itemize}
    Then, the following identities hold:
    \begin{equation}
        \begin{aligned}
            \mathbf{R}\mathcal{Z} + \mathbf{t} &= (\mathbf{R}\mathbf{G}_z, 
            \mathbf{R}\mathbf{c}_z + \mathbf{t}, \mathbf{A}_z, \mathbf{b}_z, 
            \mathfrak{G}_z),\\[1mm]
            \mathcal{Z} \oplus \mathcal{W} &= 
            \scalebox{0.84}{$\left(
            \begin{bmatrix}
                \mathbf{G}_z & \mathbf{G}_w
            \end{bmatrix}\hspace{-1mm},
            \mathbf{c}_z + \mathbf{c}_w, 
            \begin{bmatrix}
                \mathbf{A}_z & \mathbf{0}\\
                \mathbf{0} & \mathbf{A}_w
            \end{bmatrix}\hspace{-1mm},
            \begin{bmatrix}
                \mathbf{b}_z\\
                \mathbf{b}_w
            \end{bmatrix}\hspace{-1mm},
            \mathfrak{G}_z\times\mathfrak{G}_w\right)$},\\[1mm]
            \mathcal{Z} \cap_\mathbf{R} \mathcal{V} &= 
            \scalebox{0.78}{$\left(
            \begin{bmatrix}
                \mathbf{G}_z & \mathbf{0}
            \end{bmatrix}\hspace{-1mm},
            \mathbf{c}_z, 
            \begin{bmatrix}
                \mathbf{A}_z & \mathbf{0}\\
                \mathbf{0} & \mathbf{A}_v\\
                \mathbf{R}\mathbf{G}_z & -\mathbf{G}_v
            \end{bmatrix}\hspace{-1mm},
            \begin{bmatrix}
                \mathbf{b}_z\\
                \mathbf{b}_v\\
                \mathbf{c}_v - \mathbf{R}\mathbf{c}_z
            \end{bmatrix}\hspace{-1mm},
            \mathfrak{G}_z\times\mathfrak{G}_v\right)$}.
        \end{aligned}
        \label{Eq:CCGOperations}
    \end{equation}
\end{proposition}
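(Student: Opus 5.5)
The plan is to verify each of the three identities in \eqref{Eq:CCGOperations} by double set inclusion, working directly from the parametrization \eqref{Eq:DefinitionCCG}. In each case the right-hand side is a CCG whose points are images of feasible generator vectors. It therefore suffices to exhibit a bijection, or at least a surjection in both directions, between feasible generators of the right-hand side and the points of the left-hand side. A preliminary remark is that a Cartesian product of generator sets is again an admissible generator set. If $\mathfrak{G}_z=\prod_i\mathcal{G}_i$ and $\mathfrak{G}_w=\prod_j\mathcal{G}'_j$, then $\mathfrak{G}_z\times\mathfrak{G}_w$ is a product of zero-sublevel sets of convex functions, with each $g_i$ acting on its own block of coordinates. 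So the right-hand sides are genuinely CCGs.

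For the affine map, take $\mathbf{z}=\mathbf{G}_z\bm{\xi}+\mathbf{c}_z$ with $\mathbf{A}_z\bm{\xi}=\mathbf{b}_z$ and $\bm{\xi}\in\mathfrak{G}_z$. Then $\mathbf{R}\mathbf{z}+\mathbf{t}=(\mathbf{R}\mathbf{G}_z)\bm{\xi}+(\mathbf{R}\mathbf{c}_z+\mathbf{t})$. The same $\bm{\xi}$ is feasible for the right-hand side, and conversely. This is immediate.

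For the Minkowski sum, take $\mathbf{z}+\mathbf{w}$ with generators $\bm{\xi}_z,\bm{\xi}_w$ and form the stacked vector $\bm{\xi}=(\bm{\xi}_z,\bm{\xi}_w)$. The block-diagonal constraint $\mathrm{blkdiag}(\mathbf{A}_z,\mathbf{A}_w)\bm{\xi}=(\mathbf{b}_z,\mathbf{b}_w)$ is equivalent to the two separate constraints. Membership $\bm{\xi}\in\mathfrak{G}_z\times\mathfrak{G}_w$ is equivalent to the two separate memberships. Finally, $[\mathbf{G}_z\ \mathbf{G}_w]\bm{\xi}+\mathbf{c}_z+\mathbf{c}_w=\mathbf{z}+\mathbf{w}$. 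Reading the chain of equivalences in both directions gives equality.

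For the generalized intersection, first recall the definition $\mathcal{Z}\cap_\mathbf{R}\mathcal{V}=\{\mathbf{z}\in\mathcal{Z}:\mathbf{R}\mathbf{z}\in\mathcal{V}\}$. Take $\mathbf{z}$ in this set, with $\mathbf{z}=\mathbf{G}_z\bm{\xi}_z+\mathbf{c}_z$, and let $\bm{\xi}_v$ be a generator with $\mathbf{R}\mathbf{z}=\mathbf{G}_v\bm{\xi}_v+\mathbf{c}_v$. The third block row then reads $\mathbf{R}\mathbf{G}_z\bm{\xi}_z-\mathbf{G}_v\bm{\xi}_v=\mathbf{c}_v-\mathbf{R}\mathbf{c}_z$, which is exactly the coupling condition. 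The first two block rows give the original equality constraints, and the output is $[\mathbf{G}_z\ \mathbf{0}](\bm{\xi}_z,\bm{\xi}_v)+\mathbf{c}_z=\mathbf{z}$. Conversely, any feasible $(\bm{\xi}_z,\bm{\xi}_v)$ yields a point $\mathbf{z}\in\mathcal{Z}$ whose image $\mathbf{R}\mathbf{z}$ equals $\mathbf{G}_v\bm{\xi}_v+\mathbf{c}_v\in\mathcal{V}$.

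No step is a real obstacle. The only points needing care are stating the definition of $\cap_\mathbf{R}$ explicitly, since it is not given earlier, and confirming the sign convention in the coupling row of the constraint matrix.
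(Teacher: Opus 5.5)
Your proposal is correct and matches the paper's approach: the paper gives no proof of its own and defers to the original CCG reference, where these identities are obtained by substituting the generator parametrization and checking both inclusions, with $\cap_\mathbf{R}$ understood as $\{\mathbf{z}\in\mathcal{Z}:\mathbf{R}\mathbf{z}\in\mathcal{V}\}$. One cosmetic correction: generator representations are generally not unique, so "bijection" is the wrong framing; what your argument actually establishes, and all that is needed, is that the generator-to-point map of each right-hand side lands in the left-hand side and is onto it.
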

\setlength{\arraycolsep}{2pt}
\vspace{3mm}

In particular, CCGs generalize many commonly used set classes, such as 
intervals, ellipsoids, zonotopes, Constrained Zonotopes (CZs), polytopes, 
convex cones, ellipsotopes, and AH-polytopes. For additional details on CCGs, 
the reader is referred to \cite{silvestre2021constrained, silvestre2022accurate, silvestre2022set, silvestre2023exact}.

\section{PROBLEM FORMULATION} \label{Sec:Problem}

We consider a rigid-body ego vehicle whose occupied set in the environment is 
given by
\begin{equation}
    \mathcal{E}(\mathbf{p}, \bm{\psi}) = 
    \mathbf{p} + \mathbf{R}(\bm{\psi})\bar{\mathcal{E}},
\end{equation}
where $\mathbf{p} \in \mathbb{R}^p$ is the ego position (a fixed point in the 
body), $\bm{\psi} \in \mathbb{R}^\psi$ is an attitude representation, and 
$\mathbf{R}(\bm{\psi}) \in \mathrm{SO}(p)$ is the associated rotation matrix. 
The set $\bar{\mathcal{E}} \subset \mathbb{R}^p$ is a compact CCG describing 
the ego shape relative to $\mathbf{p}$ for $\mathbf{R}(\bm{\psi}) = \mathbf{I}$.

We will initially consider an ego with first-order dynamics:
\begin{equation}
    (\Dot{\mathbf{p}}, \Dot{\bm{\psi}}) = \mathbf{f}(\mathbf{p}, \bm{\psi})
    + \mathbf{G}(\mathbf{p}, \bm{\psi})\bm{\nu},
    \label{Eq:EgoFOD}
\end{equation}
where $\bm{\nu} \in \mathbb{R}^\nu$ is the input, the functions $\mathbf{f}$ 
and $\mathbf{G}$ are locally Lipschitz, and the first $p$ rows of 
$\mathbf{G}(\mathbf{p}, \bm{\psi})$ have full row rank.

\noindent Additionally, we will consider the case of second-order 
strict-feedback dynamics, where $\bm{\nu}$ cannot be directly manipulated and 
the ego dynamics are given by \eqref{Eq:EgoFOD} along with
\begin{equation}
    \Dot{\bm{\nu}} = \mathbf{f}_1(\mathbf{p}, \bm{\psi}, \bm{\nu})
    + \mathbf{G}_1(\mathbf{p}, \bm{\psi}, \bm{\nu})\bm{\tau},
    \label{Eq:EgoSOD}
\end{equation}
where $\bm{\tau} \in \mathbb{R}^\tau$ is the control input, the functions 
$\mathbf{f}_1$ and $\mathbf{G}_1$ are locally Lipschitz, and 
$\mathbf{G}_1(\mathbf{p}, \bm{\psi}, \bm{\nu})$ has full row rank.

The ego vehicle will navigate in an environment containing an obstacle whose 
occupied region is given by
\begin{equation}
    \mathcal{O}(\mathbf{r}) = \mathbf{r} + \bar{\mathcal{O}},
\end{equation}
where $\mathbf{r} \in \mathbb{R}^p$ is the position of the obstacle, and 
$\bar{\mathcal{O}} \subset \mathbb{R}^p$ is a compact CCG with nonempty 
interior describing the shape of the obstacle relative to $\mathbf{r}$. The 
dynamics of the obstacle are assumed to be unknown and nonlinear. 

At each sampling instant $t_k = kT_s$, with $k \in \mathbb{Z}_{\geq0}$ and 
sampling period $T_s \in \mathbb{R}_{>0}$,
a measurement $\mathbf{y}_k \in \mathbb{R}^p$ of the obstacle position is 
obtained as 
\begin{equation}
    \mathbf{y}_k = \mathbf{r}_k + \mathbf{w}_k,
    \label{Eq:yk}
\end{equation}
where $\mathbf{r}_k \in \mathbb{R}^p$ is the true obstacle position at time 
$t_k$, and $\mathbf{w}_k \in \mathbb{R}^p$ is Gaussian noise with $\mathbf{w}_k 
\overset{\text{\tiny i.d.}}{\sim} \mathcal N(\mathbf{0}, \bm{\Sigma}_k)$.

The collision-avoidance requirement is stated directly in terms of occupied 
sets. For a single obstacle, safety requires
\begin{equation}
    \mathcal{E}(\bm{\varphi}_\mathbf{p}(t), \bm{\varphi}_{\bm{\psi}}(t)) 
    \cap \mathcal{O}(\bm{\varphi}_{\mathbf{r}}(t)) = \emptyset
\end{equation}
for all $t \in \mathbb{R}_{\geq0}$, where $\bm{\varphi}_\mathbf{p}$, 
$\bm{\varphi}_{\bm{\psi}}$, and $\bm{\varphi}_{\mathbf{r}}$ denote the ego 
position, ego attitude, and obstacle trajectories, respectively.

The objective of this paper is to construct confidence obstacle occupancy sets from noisy sampled measurements, propagate them over the inter-sample interval under explicit motion bounds, and represent the resulting obstacle flow in exact CCG form. This produces a structured geometric description that serves as the basis for subsequent barrier construction and safety-filter synthesis.

\begin{remark}
    For clarity, we are assuming direct noisy measurements of the obstacle 
    position. Nonetheless, note that the framework extends to more general 
    linear sensing models of the form 
    $\mathbf{z}_k = \mathbf{H}\mathbf{r}_k + \bar{\mathbf{w}}_k$, where 
    $\mathbf{H}$ is a known full-column-rank matrix and 
    $\bar{\mathbf{w}}_k \overset{\text{\tiny i.d.}}{\sim} \mathcal 
    N(\mathbf{0}, \bar{\bm{\Sigma}}_k)$, by defining 
    $\mathbf{y}_k = \mathbf{H}^\dagger \mathbf{z}_k$, which yields the 
    equivalent measurement model \eqref{Eq:yk}. Also, the approach proposed 
    herein can be readily extended to environments containing multiple 
    obstacles. 
\end{remark}

\section{PROPOSED SOLUTION} \label{Sec:Solution}

Since the obstacle dynamics are unknown, direct model-based prediction of the obstacle position $\mathbf{r}(t)$ is not available. 
Instead, we construct a local data-driven approximation of the obstacle position using recent noisy position measurements.
The proposed solution consists of two main steps.
First, we construct a probabilistic unsafe set-valued flow for each sampling interval. 
Second, we convert the resulting unsafe set-valued flow into a CBF suitable for safety filtering.


The first step begins with a confidence set $\hat{\mathcal{R}}_k$ for the obstacle position at the current sampling instant $t_k$, obtained from a sliding-window least-squares estimator.
Details are outlined in Section \ref{Sec:SolutionA}.
To account for obstacle motion over the interval $[t_k, t_{k+1})$, we further introduce a velocity-bound-induced displacement set. 
Specifically, assume that the obstacle velocity satisfies $\Dot{\mathbf{r}} \in \mathcal{V}$, where $\mathcal{V} \subset \mathbb{R}^p$ is a compact CCG.
By combining the sampling-time confidence set $\hat{\mathcal{R}}_k$ and the velocity-induced displacement set $\mathcal{V}(t-t_k)$ with the ego and obstacle body shapes, we obtain the estimated unsafe set for the ego position. 
For any instant $t \in [t_k, t_{k+1})$ and attitude $\bm{\psi} \in \mathbb{R}^\psi$,
define the CCG-valued function $\hat{\mathcal{Q}}_k: \mathbb{R}^\psi\times[t_k, t_{k+1}) \rightrightarrows 
\mathbb{R}^p$ as
\begin{equation}
    \hat{\mathcal{Q}}_k(\bm{\psi}, t) = \hat{\mathcal{R}}_k 
    \oplus \mathcal{V}(t-t_k) 
    \oplus \left(-\mathbf{R}(\bm{\psi})\bar{\mathcal{E}}\right)
    \oplus \bar{\mathcal{O}}.
    \label{Eq:Qhat}
\end{equation}
Accordingly, over $[t_k, t_{k+1})$, the estimated unsafe 
set-valued flow for $(\mathbf{p}, \bm{\psi})$, 
$\hat{\mathcal{U}}_k: [t_k, t_{k+1}) \rightrightarrows 
\mathbb{R}^{p}\times\mathbb{R}^\psi$, is given by
\begin{equation}
    \hat{\mathcal{U}}_k(t) = \left\{(\mathbf{p}, \bm{\psi}) \in 
    \mathbb{R}^p\times\mathbb{R}^\psi: \mathbf{p} \in 
    \hat{\mathcal{Q}}_k(\bm{\psi}, t)\right\}.
    \label{Eq:Uhat}
\end{equation}

To achieve probabilistic safety over the interval $[t_k, t_{k+1})$, we then 
convert $\hat{\mathcal{U}}_k$ into a CBF $h_k: 
\mathbb{R}^p\times\mathbb{R}^\psi\times[t_k, t_{k+1}) \rightarrow \mathbb{R}$ 
for \eqref{Eq:EgoFOD}, inducing a safe flow $\mathcal{C}_k: [t_k, t_{k+1}) 
\rightrightarrows \mathbb{R}^p\times\mathbb{R}^\psi$ with $\mathcal{C}_k(t) 
\subseteq (\mathbb{R}^p\times\mathbb{R}^\psi)\setminus\hat{\mathcal{U}}_k(t)$ 
for all $t \in [t_k, t_{k+1})$. However, since 
$\hat{\mathcal{Q}}_k(\bm{\psi}, t)$ is a CCG, this conversion is nontrivial. 
The proposed conversion approach is outlined in Section \ref{Sec:SolutionB}.

When the ego has first-order dynamics as in \eqref{Eq:EgoFOD}, we can then 
directly use $h_k$ to design a safety filter $\mathbf{k}_k$ for the interval 
$[t_k, t_{k+1})$ as in \eqref{Eq:SafetyFilter}. For second-order dynamics, we 
first use $h_k$ to design a smooth safety filter $\mathbf{k}_k$ for 
\eqref{Eq:EgoFOD} (see \cite{ong2019universal}), and we then define a CBF 
$h_{k, 1}$ for \eqref{Eq:EgoFOD}-\eqref{Eq:EgoSOD} via backstepping as
\begin{equation}
    \scalebox{0.95}{$
    h_{k, 1}(\mathbf{p}, \bm{\psi}, \bm{\nu}, t) = 
    h_k(\mathbf{p}, \bm{\psi}, t) 
    - \dfrac{1}{2\mu}\|\bm{\nu} - \mathbf{k}_k(\mathbf{p}, \bm{\psi}, t)\|^2
    $},
\end{equation}
with $\mu \in \mathbb{R}_{>0}$. The function $h_{k, 1}$ can then be used to 
design a safety filter $\mathbf{k}_{k, 1}$  for 
\eqref{Eq:EgoFOD}-\eqref{Eq:EgoSOD} as in \eqref{Eq:SafetyFilter}. For further 
details on CBF backstepping, the reader is referred to \cite{taylor2022safe}, 
\cite{matias2026safe}.

\subsection{Data-Driven Probabilistic Obstacle Position Flow}
\label{Sec:SolutionA}
We estimate the obstacle position at the current sampling instant $t_k$ using the $N$ most recent noisy position measurements $\mathbf{y}_{k-N+1},\cdots,\mathbf{y}_{k}$. 
Rather than assuming a global model for the obstacle motion, we construct a local linear-in-parameters approximation over the current sliding window. 

Let $\bm{\phi}(t)=(\phi_1(t), \cdots, \phi_q(t))\in\mathbb{R}^q$ be a chosen 
smooth basis function vector. Over the current window, define the design matrix 
$\bm{\Phi}_k = [\bm{\phi}(t_{k-N+1}) \cdots \bm{\phi}(t_k)]^\top \in 
\mathbb{R}^{N\times q}$, measurement matrix $\mathbf{Y}_k = 
[\mathbf{y}_{k-N+1} \cdots \mathbf{y}_{k}]^\top 
\in \mathbb{R}^{N\times p}$, position matrix $\mathbf{R}_k = 
[\mathbf{r}_{k-N+1} \cdots \mathbf{r}_{k}]^\top \in \mathbb{R}^{N\times p}$, 
and similarly the noise matrix $\mathbf{W}_k = 
[\mathbf{w}_{k-N+1} \cdots \mathbf{w}_k]^\top \in \mathbb{R}^{N\times p}$. 
Here, the uncertainty is taken to arise solely from additive Gaussian 
measurement noise; no deterministic model-mismatch term is introduced at this 
stage.

\begin{assumption}[Local representability at $t_k$]
Over the current window, there exists a coefficient matrix 
$\bm{\Theta}_k^\ast\in\mathbb{R}^{q\times p}$ such that the noise-free stacked 
data satisfy $\mathbf{R}_k=\bm{\Phi}_k\bm{\Theta}_k^\ast$. In particular, the 
current noise-free obstacle position at $t_k$ is represented as 
$\mathbf{r}_k = \bm{\Theta}_k^{\ast\top}\bm{\phi}(t_k)$.
\end{assumption}

Under Assumption 1, we estimate the local coefficient matrix from the measured 
data by least squares:
\begin{equation}
    \hat{\bm{\Theta}}_k = \underset{\bm{\Theta} \in \mathbb{R}^{q\times p}}{\arg\min}\,\|\mathbf{Y}_{k}-\bm{\Phi}_k\bm{\Theta}\|_2^2.
    \label{eq:ls_estimator}
\end{equation}

Assuming that $\bm{\Phi}_k$ has full column rank, the least-squares problem in \eqref{eq:ls_estimator} admits the unique closed-form solution
\begin{equation}
    \hat{\bm\Theta}_k = \bm{\Phi}_k^\dagger\mathbf{Y}_{k}.
    \label{eq:theta_closed_form}
\end{equation}
The predicted obstacle position at the current sampling time is then obtained by evaluating the fitted local model at $t_k$:
\begin{equation}
    \hat{\mathbf{r}}_k=\hat{\bm\Theta}_k^\top \bm{\phi}(t_k).
    \label{eq:yhat_tk}
\end{equation}
The estimation accuracy is quantified by the prediction error
\begin{equation}
    \mathbf{e}_k = \mathbf{r}_k-\hat{\mathbf{r}}_k.
    \label{eq:prediction_error_def}
\end{equation}
Under Assumption 1, the true data satisfy $\mathbf{R}_k=\bm{\Phi}_k\bm{\Theta}_k^\ast$, while $\mathbf{Y}_{k}=\mathbf{R}_k+\mathbf{W}_k$. Substituting this relation in \eqref{eq:theta_closed_form} and evaluating \eqref{eq:yhat_tk} at $t_k$ shows that the nominal term $\bm{\Theta}_k^{\ast\top}\bm{\phi}(t_k)$ cancels in the error, yielding
\begin{equation}
    \mathbf{e}_k =
    -\mathbf{W}_k^\top \big(\bm{\Phi}_k^\dagger\big)^\top\bm{\phi}(t_k) 
    := -\mathbf{W}_k^\top \mathbf{a}_k.
    \label{eq:error_linear_representation}
\end{equation}

Since $\mathbf{e}_k$ is a linear combination of independent Gaussian vectors, it is Gaussian with zero mean and covariance
\begin{equation}
    \bm{\Pi}_k = \mathbf{a}_k^\top \bm{\Omega}_k \mathbf{a}_k,
\end{equation}
where $\bm{\Omega}_k$ is the covariance of the stacked noise, which can be represented as $\bm{\Omega}_k 
= \mathrm{diag}(\bm{\Sigma}_{k-N+1}, \dots, \bm{\Sigma}_k)$.

Since the current estimation error $\mathbf{e}_k$ is Gaussian with covariance $\bm{\Pi}_k$, the associated quadratic form $\mathbf{e}_k^\top \bm{\Pi}_k^{-1} \mathbf{e}_k $ follows a chi-square distribution with $p$ degrees of freedom.
Hence, for any confidence level $\alpha\in(0,1)$,
\begin{equation}
    \mathbb P\!\left(
    \mathbf{e}_k^\top \bm{\Pi}_k^{-1} \mathbf{e}_k \le \chi_p^2(1-\alpha)
    \right)=1-\alpha.
    \label{eq:chi_square_confidence}
\end{equation}

Thus, the obstacle position at the sampling instant $t_k$ lies in the confidence ellipsoid
\begin{equation}
    \scalebox{0.85}{$
    \hat{\mathcal{R}}_k = \left\{
    \mathbf{r}\in\mathbb R^p:
    (\mathbf{r}-\hat{\mathbf{r}}_k)^\top 
    \left(\chi_p^2(1-\alpha)\bm{\Pi}_k\right)^{-1}(\mathbf{r}-\hat{\mathbf{r}}_k)\le 1
    \right\},$}
    \label{eq:obstacle_confidence_ellipsoid}
\end{equation}
so that
\begin{equation}
    \mathbb P\!\left(\mathbf{r}_k\in\hat{\mathcal{R}}_k\right)=1-\alpha.
    \label{eq:obstacle_confidence_probability}
\end{equation}

To propagate this uncertainty over the inter-sample interval, we combine the confidence ellipsoid $\hat{\mathcal{R}}_k$ with the velocity-induced displacement set $\mathcal{V}(t-t_k)$. 
This combination yields a time-varying probabilistic obstacle position flow over $[t_k, t_{k+1})$, which corresponds to the unsafe set $\hat{\mathcal{Q}}_k(\bm{\psi}, t)$ and flow $\hat{\mathcal{U}}_k(t)$ from (\ref{Eq:Qhat}) and (\ref{Eq:Uhat}).


\subsection{CCG-to-CBF Conversion} \label{Sec:SolutionB}

As previously detailed, given a confidence CCG $\hat{\mathcal{R}}_{k}$ for the 
obstacle position, for any instant $t \in [t_k, t_{k+1})$ and attitude 
$\bm{\psi} \in \mathbb{R}^\psi$, the estimated unsafe set for the ego position 
is given by the CCG $\hat{\mathcal{Q}}_k(\bm{\psi}, t)$ defined in 
\eqref{Eq:Qhat}. Using the identities in \eqref{Eq:CCGOperations}, we then 
conclude that $\hat{\mathcal{Q}}_k(\bm{\psi}, t)$ is a CCG of the form
\begin{equation}
    \hat{\mathcal{Q}}_k(\bm{\psi}, t) = (\mathbf{G}_{\hat{q}, k}(\bm{\psi}, t), 
    \mathbf{c}_{\hat{q}, k}(\bm{\psi}, t), \mathbf{A}_{\hat{q}}, 
    \mathbf{b}_{\hat{q}}, \mathfrak{G}_{\hat{q}}),
    \label{Eq:QhatCCG}
\end{equation}
with parameters given by
\begin{equation}
    \begin{aligned}
        \mathbf{G}_{\hat{q}, k}(\bm{\psi}, t) &= 
        \begin{bmatrix}
            \mathbf{G}_{\hat{r}, k} & \mathbf{G}_{v}(t-t_k)
            & -\mathbf{R}(\bm{\psi})\mathbf{G}_{\bar{e}} & \mathbf{G}_{\bar{o}}
        \end{bmatrix},\\
        \mathbf{c}_{\hat{q}, k}(\bm{\psi}, t) &= 
        \mathbf{c}_{\hat{r}, k} + \mathbf{c}_{v}(t-t_k)
        - \mathbf{R}(\bm{\psi})\mathbf{c}_{\bar{e}}
        + \mathbf{c}_{\bar{o}},\\
        \mathbf{A}_{\hat{q}} &= \mathrm{diag}(\mathbf{A}_{\hat{r}}, 
        \mathbf{A}_{v}, \mathbf{A}_{\bar{e}}, \mathbf{A}_{\bar{o}}),\\
        \mathbf{b}_{\hat{q}} &= \left(\mathbf{b}_{\hat{r}}, \mathbf{b}_{v}, 
        \mathbf{b}_{\bar{e}}, \mathbf{b}_{\bar{o}}\right),\\
        \mathfrak{G}_{\hat{q}} &= \mathfrak{G}_{\hat{r}}\times
        \mathfrak{G}_{v}\times\mathfrak{G}_{\bar{e}}\times
        \mathfrak{G}_{\bar{o}}.
    \end{aligned}
\end{equation}
As all sets in \eqref{Eq:Qhat} are compact and at least one has nonempty 
interior, $\hat{\mathcal{Q}}_k(\bm{\psi}, t)$ is also compact and has nonempty 
interior. Given $\hat{\mathcal{Q}}_k$, the next step is then to translate the 
unsafe set-valued flow $\hat{\mathcal{U}}_k$ defined in \eqref{Eq:Uhat} into a 
CBF $h_k$ for \eqref{Eq:EgoFOD}. To this end, we build on the approach proposed 
in \cite{matias2026safe}, generalizing it to handle vehicle attitude, which was 
not addressed in \cite{matias2026safe}. 

Let $\bm{\xi}$ be the generator variable of 
$\hat{\mathcal{Q}}_k(\bm{\psi}, t)$. Our first step is to eliminate the 
equality constraint $\mathbf{A}_{\hat{q}}\bm{\xi} = \mathbf{b}_{\hat{q}}$ in 
\eqref{Eq:QhatCCG}. As $\hat{\mathcal{Q}}_k(\bm{\psi}, t)$ has nonempty 
interior, this equation has infinitely many solutions, which can be expressed 
parametrically as
\begin{equation}
    \bm{\xi} = \mathbf{A}_{\hat{q}}^\dagger\mathbf{b}_{\hat{q}} 
    + \mathrm{null}(\mathbf{A}_{\hat{q}})\bm{\eta},\quad 
    \bm{\eta} \in \mathbb{R}^\eta.
\end{equation}
Next, observe that, by construction, the set $\mathfrak{G}_{\hat{q}}$ 
can be written as $\mathfrak{G}_{\hat{q}} = 
\mathcal{G}_1\times\dots\times\mathcal{G}_G$, where, for each 
$i \in \mathcal{I} = \{1, \dots, G\}$, $\mathcal{G}_i$ is the zero-sublevel set 
of a convex function $g_i: \mathbb{R}^{\xi_i} \rightarrow \mathbb{R}$. This 
allows $\mathfrak{G}_{\hat{q}}$ to be written compactly as
\begin{equation}
    \mathfrak{G}_{\hat{q}} = \left\{\bm{\xi} \in \mathbb{R}^{\xi}: 
    \max_{i \in \mathcal{I}}\,g_i(\mathbf{S}_i\bm{\xi}) \leq 0\right\},
\end{equation}
where $\mathbf{S}_i \in \mathbb{R}^{\xi_i\times\xi}$ extracts the $i$th 
generator vector $\bm{\xi}_i$ from the full vector 
$\bm{\xi} = (\bm{\xi}_1, \dots, \bm{\xi}_G)$. However, as the maximum operator 
is not differentiable, it is unsuitable for CBF design. To address this, we 
employ a smooth underapproximation of the maximum, adopting the LogSumExp 
approach from \cite{molnar2023composing}.

Combining the above two steps, we introduce a new CCG-valued function 
$\tilde{\mathcal{Q}}_k$ defined as
\begin{equation}
    \tilde{\mathcal{Q}}_k(\bm{\psi}, t) = 
    (\mathbf{G}_{\tilde{q}, k}(\bm{\psi}, t), 
    \mathbf{c}_{\tilde{q}, k}(\bm{\psi}, t), [\,\,], [\,\,], 
    \mathfrak{G}_{\tilde{q}}),
\end{equation}
whose parameters are given by
\begin{equation}
    \begin{aligned}
        \mathbf{G}_{\tilde{q}, k}(\bm{\psi}, t) &= 
        \mathbf{G}_{\hat{q}, k}(\bm{\psi}, t)
        \mathrm{null}(\mathbf{A}_{\hat{q}}),\\
        \mathbf{c}_{\tilde{q}, k}(\bm{\psi}, t) &=  
        \mathbf{c}_{\hat{q}, k}(\bm{\psi}, t) 
        + \mathbf{G}_{\hat{q}, k}(\bm{\psi}, t)
        \mathbf{A}_{\hat{q}}^\dagger\mathbf{b}_{\hat{q}}.
    \end{aligned}
\end{equation}
Moreover, the generator set $\mathfrak{G}_{\tilde{q}}$ is the zero-sublevel set 
of a function $f: \mathbb{R}^{\eta} \rightarrow \mathbb{R}$ defined as
\begin{equation}
    \begin{aligned}
        f(\bm{\eta}) &= \frac{1}{\gamma}\ln\left(\sum_{i \in \mathcal{I}}
        \exp(\gamma f_i(\bm{\eta}))\right)- \frac{\ln(G+1)}{\gamma},
    \end{aligned}
\end{equation}
where each function $f_i: \mathbb{R}^{\eta} \rightarrow \mathbb{R}$ is defined 
as
\begin{equation}
    f_i(\bm{\eta}) = 
    g_i\left(\mathbf{S}_i\mathbf{A}_{\hat{q}}^\dagger\mathbf{b}_{\hat{q}}
    + \mathbf{S}_i\mathrm{null}(\mathbf{A}_{\hat{q}})\bm{\eta}\right),
\end{equation}
and \hfill $\gamma \in \mathbb{R}_{>0}$ \hfill is \hfill a \hfill smoothing 
\hfill parameter. \hfill This \hfill construction 

\newpage

\noindent eliminates the equality constraint and replaces the maximum operator 
with a smooth approximation, yielding a representation suitable for CBF design. 
Furthermore, $\tilde{\mathcal{Q}}_k$ induces an approximation 
$\tilde{\mathcal{U}}_k$ of $\hat{\mathcal{U}}_k$, which satisfies 
$\hat{\mathcal{U}}_k(t) \subset \mathrm{int}(\tilde{\mathcal{U}}_k(t))$ 
for all $t \in [t_k, t_{k+1})$ and converges to $\hat{\mathcal{U}}_k$ as 
$\gamma\rightarrow\infty$ \cite{molnar2023composing}.

Our goal is now to represent $\tilde{\mathcal{U}}_k(t)$ using a CBF candidate 
$h_k$. To this end, observe that a state $(\mathbf{p}, \bm{\psi})$ belongs to 
$\tilde{\mathcal{U}}_k(t)$ if and only if there exists 
$\bm{\eta} \in \mathbb{R}^\eta$ such that $f(\bm{\eta}) \leq 0$ and 
$\mathbf{G}_{\tilde{q}, k}(\bm{\psi}, t)\bm{\eta} 
+ \mathbf{c}_{\tilde{q}, k}(\bm{\psi}, t) = \mathbf{p}$. Thus, 
$\tilde{\mathcal{U}}_k(t)$ can be written as
\begin{equation}
     \tilde{\mathcal{U}}_k(t) = \{(\mathbf{p}, \bm{\psi}) \in 
    \mathbb{R}^p\times\mathbb{R}^\psi: h_k(\mathbf{p}, \bm{\psi}, t) \leq 0\},
     \label{Eq:CBFCCGFlow}
\end{equation}
where $h_k$ is defined via the following optimization problem:
\begin{equation} 
    \begin{aligned}
        h_k(\mathbf{p}, \bm{\psi}, t) = 
        \min_{\bm{\eta} \in \mathbb{R}^\eta}\,\, &f(\bm{\eta})\\
        \text{subject to}\,\, 
        &\mathbf{G}_{\tilde{q}, k}(\bm{\psi}, t)\bm{\eta} 
        + \mathbf{c}_{\tilde{q}, k}(\bm{\psi}, t) = \mathbf{p}.
    \end{aligned}
    \label{Eq:CBFCCG}
\end{equation}
The following 
result provides a mild sufficient condition for $h_k$ to be a valid CBF for 
\eqref{Eq:EgoFOD}.

\begin{theorem}
    If $f$ is twice continuously differentiable and strictly convex, then $h_k$ 
    is a valid CBF for \eqref{Eq:EgoFOD}.
\end{theorem}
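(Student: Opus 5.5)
The approach is to show that $h_k$ is continuously differentiable with nonvanishing gradient on its zero level set, and then to use the full-row-rank assumption on the first $p$ rows of $\mathbf{G}$ to verify \eqref{Eq:DefCBF}. The sign convention needs care. The unsafe set is $\{h_k\le 0\}$, so the safe flow is $\mathcal{C}_k(t)=\{h_k\ge 0\}$. For this superlevel set we need $\frac{\partial}{\partial\mathbf{p}}h_k\neq\mathbf{0}$ where $h_k=0$, together with existence of a $\bm{\nu}$ making $\dot h_k$ large.

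\textbf{Step 1: differentiability via duality/KKT.} Problem \eqref{Eq:CBFCCG} minimizes a strictly convex $C^2$ function subject to an affine constraint. I will first argue that $\mathbf{G}_{\tilde{q},k}(\bm{\psi},t)$ has full row rank $p$. This follows because $\hat{\mathcal{Q}}_k$ has nonempty interior, and the image of $\mathfrak{G}_{\tilde q}$ under the affine map spans $\mathbb{R}^p$. Hence the feasible set is nonempty for every $(\mathbf{p},\bm{\psi},t)$. Coercivity of $f$ follows from compactness of the sublevel sets, since each generator set is compact. So a unique minimizer $\bm{\eta}^\ast$ exists, and there is a unique multiplier $\bm{\lambda}^\ast$ satisfying
\[
\nabla f(\bm{\eta}^\ast)=\mathbf{G}_{\tilde q,k}^\top\bm{\lambda}^\ast,\qquad \mathbf{G}_{\tilde q,k}\bm{\eta}^\ast+\mathbf{c}_{\tilde q,k}=\mathbf{p}.
\]
I will then apply the implicit function theorem to this KKT system in $(\bm{\eta},\bm{\lambda})$. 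The Jacobian is
\[
\begin{bmatrix}\nabla^2 f & -\mathbf{G}^\top\\ \mathbf{G} & \mathbf{0}\end{bmatrix}.
\]
This matrix is nonsingular provided $\nabla^2 f(\bm{\eta}^\ast)$ is positive definite on $\mathrm{null}(\mathbf{G})$ and $\mathbf{G}$ has full row rank. \emph{This is the main obstacle.} Strict convexity alone permits a singular Hessian. I would handle it by restricting to points where $\nabla^2 f\succ 0$ on the null space, or by noting that LogSumExp of strictly convex $g_i$ gives this generically; failing that, I would interpret the hypothesis as requiring positive definiteness. The data $\mathbf{G}_{\tilde q,k},\mathbf{c}_{\tilde q,k}$ are smooth in $(\bm{\psi},t)$, so $\bm{\eta}^\ast,\bm{\lambda}^\ast$ are $C^1$. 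The envelope theorem then gives $\frac{\partial}{\partial\mathbf{p}}h_k=\bm{\lambda}^{\ast\top}$. The derivatives with respect to $\bm{\psi}$ and $t$ are $-\bm{\lambda}^{\ast\top}\big(\frac{\partial}{\partial(\cdot)}\mathbf{G}_{\tilde q,k}\,\bm{\eta}^\ast+\frac{\partial}{\partial(\cdot)}\mathbf{c}_{\tilde q,k}\big)$, and all of these are continuous.

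\textbf{Step 2: nonzero gradient on the boundary.} Suppose $h_k=0$ and $\bm{\lambda}^\ast=\mathbf{0}$. Then $\nabla f(\bm{\eta}^\ast)=\mathbf{0}$, so $\bm{\eta}^\ast$ is the global minimizer of $f$. But $\min f<0$: the generator set has nonempty interior, so some $\bm{\eta}$ has $f(\bm{\eta})<0$, which forces $f(\bm{\eta}^\ast)<0$ and contradicts $h_k=0$. Hence $\frac{\partial}{\partial\mathbf{p}}h_k=\bm{\lambda}^{\ast\top}\neq\mathbf{0}$ on the zero level set.

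\textbf{Step 3: CBF condition.} Take $\mathcal{D}$ to be an open neighborhood of $\mathcal{C}_k$ on which $\bm{\lambda}^\ast\neq\mathbf{0}$. Such a neighborhood exists by continuity, and one can alternatively take the whole space minus the set where $\nabla f(\bm\eta^\ast)=0$. There, $L_{\mathbf{G}}h_k=\bm{\lambda}^{\ast\top}[\mathbf{I}_p\ \mathbf{0}]\mathbf{G}+\frac{\partial}{\partial\bm{\psi}}h_k\,[\mathbf{0}\ \mathbf{I}]\mathbf{G}$. If the attitude term can cancel, I would instead use a $\mathcal{D}$ where $L_{\mathbf{G}}h_k\neq\mathbf{0}$. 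At worst one can argue, using full row rank of the first $p$ rows, that $L_{\mathbf{G}}h_k\neq \mathbf{0}$ whenever $\frac{\partial}{\partial\bm\psi}h_k$ is small relative to $\bm{\lambda}^\ast$; this is the remaining delicate point. Wherever $L_{\mathbf{G}}h_k\neq\mathbf{0}$, the supremum in \eqref{Eq:DefCBF} is $+\infty$, so the inequality holds for any extended class-$\mathcal{K}_\infty$ function $\alpha$. This completes the verification.
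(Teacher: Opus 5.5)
Your route is the paper's own. It uses the KKT system plus the Implicit Function Theorem for (i), a nonvanishing multiplier on $\{h_k=0\}$ for (ii), and unboundedness of $\dot h_k$ in $\bm{\nu}$ for (iii). Your Step 2 is fine; it is an explicit version of the paper's ``$h_k$ is convex in $\mathbf{p}$'' remark, using $\min f<0$.

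Both points you flag are genuine, and the paper's proof glosses over both (``invertible, by the Schur complement'' and ``by the rank of $\mathbf{G}(\mathbf{p},\bm{\psi})$''). Your proposal leaves both open, however, so it does not yet prove the statement.

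\textbf{First point (Hessian).} This one is repairable without changing the hypothesis. Restricting to points with nondegenerate Hessian, or appealing to genericity, does not work, because $h_k$ must be $C^1$ on all of $\mathcal{D}$. The IFT is simply not needed here. Since $\mathbf{G}_{\tilde q,k}$ has full row rank, $(\bm\eta^\ast,\bm\lambda^\ast)$ is unique and continuous. Then $h_k$ is squeezed between two bounds:
\begin{itemize}
\item an upper bound: evaluate $f$ at the feasible point $\bm\eta^\ast+\mathbf{G}_{\tilde q,k}^\dagger(\mathbf{p}'-\mathbf{c}_{\tilde q,k}-\mathbf{G}_{\tilde q,k}\bm\eta^\ast)$, with the data taken at the perturbed $(\bm\psi',t')$;
\item a weak-duality lower bound with $\bm\lambda^\ast$ frozen: $-f^\ast(\mathbf{G}_{\tilde q,k}^\top\bm\lambda^\ast)+\bm\lambda^{\ast\top}(\mathbf{p}'-\mathbf{c}_{\tilde q,k})$.
\end{itemize}
The lower bound is differentiable because strict convexity of $f$ makes $\partial f^\ast(\nabla f(\bm\eta^\ast))=\{\bm\eta^\ast\}$. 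Both bounds touch $h_k$ and have the same derivative, which is exactly your envelope formula. Alternatively, if you want to keep the IFT route, what it really needs is $\nabla^2 f\succ 0$. For the LogSumExp $f$ this holds as soon as every $\nabla^2 g_i\succ 0$, because $\mathrm{null}(\mathbf{A}_{\hat q})$ has full column rank.

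\textbf{Second point (cancellation in $L_{\mathbf{G}}h_k$).} This one cannot be repaired from the stated hypotheses, and your fallback is not available. $\mathcal{D}$ must contain $\mathcal{C}_k$, hence all of $\{h_k=0\}$. So you cannot shrink $\mathcal{D}$ to avoid points where $L_{\mathbf{G}}h_k=\mathbf{0}$, and nothing forces $\partial_{\bm\psi}h_k$ to be small relative to $\bm\lambda^\ast$.

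Here is a concrete failure. Take $p=2$, $\mathbf{f}=\mathbf{0}$ and $\mathbf{G}=\begin{bmatrix}\mathbf{I}_2\\ \mathbf{a}^\top\end{bmatrix}$, so the first $p$ rows have full row rank. Pick a boundary point where $\partial_\psi h_k\neq 0$, which is generic for a non-symmetric ego. Choose the constant $\mathbf{a}$ so that $\bm\lambda^{\ast\top}+\partial_\psi h_k\,\mathbf{a}^\top=\mathbf{0}$ at that point. Let $\mathcal{V}$ be a ball centered at the origin, so that $h_k$ is nonincreasing in $t$. Then at that point $\sup_{\bm\nu}\dot h_k=\partial_t h_k\le 0=-\alpha(0)$, and the strict inequality \eqref{Eq:DefCBF} fails.

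What is missing, and what the paper tacitly uses, is an extra structural assumption. One option is that the whole $\mathbf{G}(\mathbf{p},\bm\psi)$ has full row rank, as in \eqref{Eq:ExEgoFOD}. More generally, it suffices that the row spaces of its position rows and attitude rows intersect only in $\mathbf{0}$. Under either assumption, $\bm\lambda^\ast\neq\mathbf{0}$ forces $L_{\mathbf{G}}h_k\neq\mathbf{0}$, and your Step 3 goes through for every extended class-$\mathcal{K}_\infty$ function $\alpha$.
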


\begin{proof}
    The proof is similar to that of Theorem 3 from \cite{matias2026safe}, with 
    the added technicality of the attitude dependence.

    To show that $h_k$ is a CBF for \eqref{Eq:EgoFOD}, we must verify that: (i) 
    $h_k$ is continuously differentiable, (ii) 
    $\frac{\partial}{\partial(\mathbf{p}, \bm{\psi})}
    h_k(\mathbf{p}, \bm{\psi}, t) \neq \mathbf{0}$ for 
    $h_k(\mathbf{p}, \bm{\psi}, t) = 0$, and (iii) there is an extended 
    class-$\mathcal{K}_\infty$ $\alpha$ such that 
    $\sup_{\bm{\nu} \in \mathbb{R}^\nu} 
    \Dot{h}_k(\mathbf{p}, \bm{\psi}, t, \bm{\nu}) 
    > -\alpha(h_k(\mathbf{p}, \bm{\psi}, t))$ when 
    $h_k(\mathbf{p}, \bm{\psi}, t) \geq 0$ (at least).
    
    As the optimization problem in \eqref{Eq:CBFCCG} is convex, its 
    solutions coincide with those of the corresponding KKT system:
    \begin{equation}
        \scalebox{0.91}{$
        \bm{\gamma}_k(\mathbf{p}, \bm{\psi}, t, \bm{\eta}, \bm{\lambda}) :=
        \begin{bmatrix}
            \left(\dfrac{\partial f(\bm{\eta})}{\partial\bm{\eta}}\right)^\top
            + \mathbf{G}_{\tilde{q}, k}(\bm{\psi}, t)^\top\bm{\lambda}\\[3mm]
            \mathbf{G}_{\tilde{q}, k}(\bm{\psi}, t)\bm{\eta} + 
            \mathbf{c}_{\tilde{q}, k}(\bm{\psi}, t) - \mathbf{p}
        \end{bmatrix} = \mathbf{0},$}
        \label{Eq:KKT}
    \end{equation}
    with $\bm{\lambda} \in \mathbb{R}^{p}$. The derivative of $\bm{\gamma}_k$ 
    with respect to $(\bm{\eta}, \bm{\lambda})$ is
    \begin{equation}
        \frac{\partial \bm{\gamma}_k(\mathbf{p}, \bm{\psi}, t, \bm{\eta}, 
        \bm{\lambda})}{\partial (\bm{\eta}, \bm{\lambda})} =
        \begin{bmatrix}
            \dfrac{\partial^2 f(\bm{\eta})}{\partial\bm{\eta}^2}
            & \mathbf{G}_{\tilde{q}, k}(\bm{\psi}, t)^\top\\[3mm]
            \mathbf{G}_{\tilde{q}, k}(\bm{\psi}, t) & \mathbf{0}
        \end{bmatrix}.
    \end{equation}
    As $f$ is strictly convex and $\mathbf{G}_{\tilde{q}, k}(\bm{\psi}, t)$ has 
    full row rank (as $\mathrm{int}(\tilde{\mathcal{U}}_k(t)) \neq \emptyset$), 
    the above matrix is invertible, by the Schur complement. Hence, by the 
    Implicit Function Theorem, the KKT system defines 
    $(\bm{\eta}, \bm{\lambda}) = \bm{\ell}_k(\mathbf{p}, \bm{\psi}, t)$, where 
    the implicit function $\bm{\ell}_k$ is continuously differentiable with
    \begin{equation}
        \scalebox{1.12}{$
        \frac{\partial \bm{\ell}_k(\mathbf{p}, \bm{\psi}, t)}
        {\partial(\mathbf{p}, \bm{\psi}, t)} 
        = - \left[\left(\frac{\partial \bm{\gamma}_k(\cdot)}
        {\partial (\bm{\eta}, \bm{\lambda})}\right)^{-1} 
        \frac{\partial \bm{\gamma}_k(\cdot)}
        {\partial (\mathbf{p}, \bm{\psi}, t)}\right]_{
        \bm{\ell}_k(\mathbf{p}, \bm{\psi}, t)}.$}
    \end{equation}
    Thus, $h_k$ can be written as $h_k(\mathbf{p}, \bm{\psi}, t) = 
    f(\mathbf{E}_{\bm{\eta}}\bm{\ell}_k(\mathbf{p}, \bm{\psi}, t))$, where 
    $\mathbf{E}_{\bm{\eta}}$ extracts $\bm{\eta}$ from 
    $(\bm{\eta}, \bm{\lambda})$, from which it is clear that $h_k$ is 
    continuously differentiable, hence proving (i).

    Since $h_k$ is convex in $\mathbf{p}$, we have 
    $\frac{\partial}{\partial\mathbf{p}} h_k(\mathbf{p}, \bm{\psi}, t) \neq 
    \mathbf{0}$ when $h_k(\mathbf{p}, \bm{\psi}, t) \geq 0$. Thus, (ii) is 
    satisfied, and (iii) holds for any extended class-$\mathcal{K}_\infty$ 
    $\alpha$ since $\Dot{h}_k$ can be made arbitrarily large when 
    $\frac{\partial}{\partial\mathbf{p}} h_k(\mathbf{p}, \bm{\psi}, t) \neq 
    \mathbf{0}$, by the rank of $\mathbf{G}(\mathbf{p}, \bm{\psi})$.
\end{proof}

The previous result gives a mild condition for $h_k$ to be a CBF for 
\eqref{Eq:EgoFOD}. Particularly, if each $g_i$ is twice continuously 
differentiable and strictly convex, then $f$ inherits these properties and is a 
valid CBF for \eqref{Eq:EgoFOD}. For details on constructing CCGs satisfying such conditions, we refer the reader to \cite{matias2026safe}.

\section{SIMULATION RESULTS} \label{Sec:Results}

We consider a planar navigation scenario involving a fully actuated 3-DOF ego 
vehicle tracking a smooth reference trajectory 
$\mathbf{q}: \mathbb{R}_{\geq0} \rightarrow \mathbb{R}^2$ generated via 
waypoint interpolation, while avoiding a nonlinear obstacle vehicle 
performing a circular trajectory. The ego has polytopic geometry,
and the obstacle has ellipsoidal geometry that depends on its attitude. As 
obstacle attitude is not measured, avoidance is based on a circular body set 
$\bar{\mathcal{O}}$ obtained by revolving the ellipsoid over all possible 
attitudes. 
The obstacle position is measured as
\begin{equation}
    \mathbf{y}_k = \mathbf{r}_k + \mathbf{w}_k,\quad 
    \mathbf{w}_k \overset{\text{\tiny i.i.d.}}{\sim} 
    \mathcal N(\mathbf{0}, \sigma^2\mathbf{I}).
\end{equation}

The simulations are conducted with $T_s = \SI{0.1}{s}$, $\sigma = 0.5$. The 
data-driven approach from Section \ref{Sec:SolutionA} is implemented with a 
third-order polynomial approximation, window length $N = 100$, and a confidence 
level of $\SI{95}{\%}$, i.e., $\alpha = 0.05$.

\begin{figure}[t]
    \centering
    \includegraphics[width=0.9\linewidth]{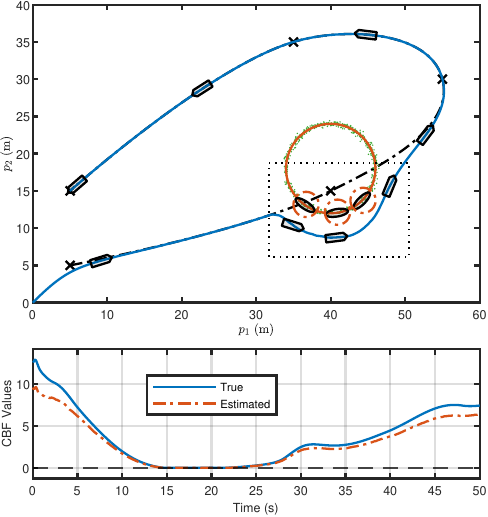}
    \caption{Safe navigation of a polytopic ego vehicle with first-order 
    dynamics around an obstacle vehicle moving along a circular trajectory. The 
    top panel shows the trajectories of the ego vehicle (blue) and the 
    obstacle (orange), where the orange dash-dotted circles illustrate the 
    estimated obstacle sets. The three snapshots inside the dotted rectangle 
    depict consecutive instants. The bottom panel depicts the estimated and 
    true CBF values over time.}
    \label{Fig:Example1}
\end{figure}

\begin{example}
    We begin with an ego vehicle with first-order dynamics of the form
    \begin{equation}
        (\Dot{\mathbf{p}}, \Dot{\psi}) 
        = \mathbf{G}(\mathbf{p}, \psi)\bm{\nu},\quad
        \mathbf{G}(\mathbf{p}, \psi) = 
        \begin{bmatrix}
            \mathbf{R}(\psi) & \mathbf{0}_{2\times1}\\
            \mathbf{0}_{1\times2} & 1
        \end{bmatrix},
        \label{Eq:ExEgoFOD}
    \end{equation}
    where $\psi \in \mathbb{R}$ is the heading angle and 
    $\mathbf{R}(\psi) \in \mathrm{SO}(2)$ the respective rotation matrix, and 
    $\bm{\nu} \in \mathbb{R}^3$ is the input, composed of the 
    body-frame surge, sway, and yaw velocities. Reference tracking is achieved 
    with a nominal controller $\mathbf{k}_\mathrm{d}$ defined as
    \begin{equation}
        \mathbf{k}_\mathrm{d}(\mathbf{p}, \psi, t) 
        = \mathbf{G}(\mathbf{p}, \psi)^\top
        \begin{bmatrix}
            \Dot{\mathbf{q}}(t) + k_\mathbf{p}(\mathbf{q}(t)-\mathbf{p})\\
            k_\psi(\theta(t)-\psi)
        \end{bmatrix},
    \end{equation}
    with $\theta(t) = \mathrm{atan2}
    (\Dot{\mathbf{q}}(t) + k_\mathbf{p}(\mathbf{q}(t)-\mathbf{p}))$ and 
    $k_\mathbf{p}, k_\psi \in \mathbb{R}_{>0}$.

    Fig. \ref{Fig:Example1} illustrates this scenario, showing that the ego 
    vehicle successfully \hfill avoids \hfill the \hfill obstacle \hfill while 
    \hfill continuing \hfill to \hfill follow
    
    \newpage
    
    \noindent the reference path. The top panel shows the resulting avoidance 
    maneuver around the estimated unsafe region, and the bottom panel indicates 
    that both the true and estimated CBF values approach zero near the closest 
    interaction but remain nonnegative, confirming that safety is maintained.
\end{example}

\begin{example}
    In this second example, the ego has second-order dynamics, extending the 
    first-order model \eqref{Eq:ExEgoFOD} with
    \begin{equation}
        \Dot{\bm{\nu}} = \mathbf{M}^{-1}(\bm{\tau}-\mathbf{D}\bm{\nu}),
    \end{equation}
    where $\bm{\tau} \in \mathbb{R}^3$ is the input, and 
    $\mathbf{M}, \mathbf{D} \in \mathbb{R}^{3\times3}$ represent the dynamics 
    of an Unmanned Surface Vehicle (USV). Tracking is achieved with a nominal 
    controller $\mathbf{k}_{\mathrm{d}, 1}$ defined as
    \begin{equation}
        \mathbf{k}_{\mathrm{d}, 1}(\mathbf{p}, \psi, \bm{\nu}, t) 
        = \mathbf{K}_{\bm{\nu}}(
        \mathbf{k}_\mathrm{d}(\mathbf{p}, \psi, t)-\bm{\nu}),
    \end{equation}
    with a positive-definite gain matrix 
    $\mathbf{K}_{\bm{\nu}} \in \mathbb{R}^{3\times3}$. 
    
    Fig. \ref{Fig:Example2} presents the resulting behavior of the ego vehicle. As in the previous example, the vehicle maintains safe separation from the moving obstacle while continuing to track the reference trajectory. The top panel shows the resulting motion relative to the estimated unsafe region, illustrating that the vehicle adopts a more conservative avoidance strategy. Unlike the first-order case, the additional inertia associated with the USV-like dynamics prevents the ego vehicle from executing a sharp maneuver around the obstacle. Instead, it slows down and temporarily drifts backward before resuming convergence to the reference path. The bottom panel shows that both the true and estimated CBF values approach zero at the point of closest interaction but remain nonnegative throughout the maneuver, confirming that the safety constraint is satisfied at all times.
\end{example}

\begin{figure}[t]
    \centering
    \includegraphics[width=0.9\linewidth]{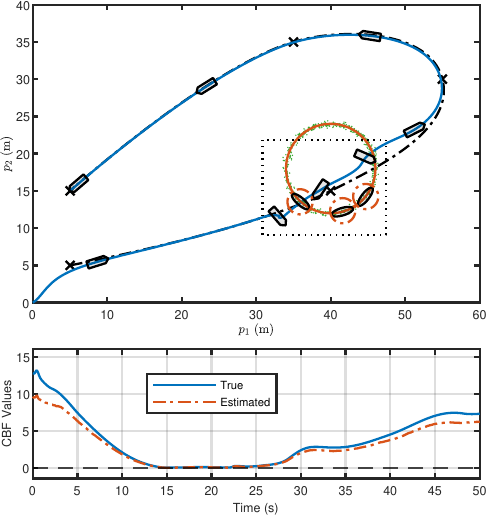}
    \caption{Navigation of a polytopic ego vehicle with second-order dynamics 
    around an obstacle vehicle moving along a circular trajectory. The top 
    panel shows the trajectories of the ego vehicle (blue) and the obstacle 
    (orange), where the orange dash-dotted circles illustrate the estimated 
    obstacle sets. The three snapshots inside the dotted rectangle depict 
    consecutive instants. The bottom panel depicts the estimated and true CBF 
    values over time.}
    \label{Fig:Example2}
\end{figure}

\section{CONCLUSION} \label{Sec:Conclusion}
This letter proposed a CBF-based probabilistic safe-navigation framework for moving obstacles with unknown nonlinear dynamics. 
From noisy sampled measurements, it constructs and propagates a probabilistic unsafe set over the inter-sample interval, represents it exactly in CCG form, and converts it into a CBF for safety-filter synthesis. 
Simulations verify safe navigation for both first- and second-order ego dynamics under measurement uncertainty and unknown obstacle motion.

\addtolength{\textheight}{-3cm}   


\bibliographystyle{ieeetr}
\bibliography{Refs}

\end{document}